\begin{document}

\title{
Pumping lemmas for linear and nonlinear context-free languages
}
\maketitle

\centerline{\emph{Dedicated to P\'al D\"om\"osi on his 65th birthday}}

\twoauthors{%
\href{http://www.inf.unideb.hu/~geza}{G\'eza Horv\'ath}
}{%
\href{http://www.unideb.hu/portal/en}{University of Debrecen} 
}{%
\href{mailto:geza@inf.unideb.hu}{geza@inf.unideb.hu} 
}{%
\href{http://www.inf.unideb.hu/~nbenedek}{Benedek Nagy} 
}{%
\href{http://www.unideb.hu/portal/en}{University of Debrecen}
}{%
\href{mailto:nbenedek@inf.unideb.hu}{nbenedek@inf.unideb.hu} 
}

\short{G. Horv\'ath, B. Nagy}{%
Pumping lemmas for linear and nonlinear 
languages
} 

\begin{abstract}
Pumping lemmas are created to prove that given languages
are not belong to certain language classes. There are several known pumping lemmas
for the whole class and some special classes of the context-free languages.
In this paper
we prove new, interesting pumping lemmas for special linear and context-free
language classes. Some of them can be used to pump regular languages in two place
simultaneously. Other lemma can be used to pump context-free languages in arbitrary
many places.
\end{abstract}

\section{Introduction}

The formal language theory and generative grammars form one of the basics of the field of theoretical computer
 science \cite{hopul,handbo}.
 Pumping lemmas play important role in formal language theory \cite{BARH,pali-pump}.
One can prove that a language does not belong to a given language class.
There are well-known pumping lemmas, for example, for regular and context-free languages.
The first and most basic pumping lemma is introduced by Bar-Hillel, Perles, and
Shamir in 1961 for context-free languages \cite{BARH}.
Since that time many pumping lemmas are introduced for various
language classes. Some of them are easy to use/prove, some of them
are more complicated. Sometimes a new pumping lemma is
introduced to prove that a special language does not belong to a
given language class.
Several subclasses of context-free languages are known, such as 
deterministic context-free and linear languages.
 The linear language class is strictly between the regular and the context-free ones.
In linear grammars only the
following types of  rules can be used: $A \to w$, $A \to uBv$ ($A,B$ are non-terminals, $w,u,v \in V^*$). In the sixties,
Amar and Putzolu defined and analysed a special subclass of linear languages, the so-called even-linear ones,
in which the rules has a kind of symmetric shape \cite{am-put1} (in a rule of shape $A\to uBv$, i.e., with non-terminal at the right hand side, the length of $u$ must equal to the length of $v$).
The even-linear languages are intensively studied, for instance, they play special importance in 
learning theory \cite{even-lin}. 
In \cite{am-put2} Amar and Putzolu extended the definition to any fix-rated linear languages.
They defined the $k$-rated linear grammars and languages, in which
the ratio of the lengths of $v$ and $u$ equals to a fixed non-negative rational number $k$ for all rules of the grammar containing non-terminal in the right-hand-side.
 They used the term
$k$-linear for the grammar class and $k$-regular for the generated language class.
In the literature the $k$-linear grammars and languages are frequently 
used for the metalinear grammars and languages \cite{hopul}, as they are extensions of
the linear ones (having at most $k$ nonterminals in the sentential forms). 
 Therefore, for clarity, we prefer the term fix-rated ($k$-rated) linear for those restricted linear grammars and languages
 that are introduced in \cite{am-put2}.
The classes $k$-rated linear languages are strictly between the linear and regular
 ones for any rational value of $k$. Moreover their union the set of all fixed-linear
 languages is also strictly included in the class of linear languages. 
 %
In special case $k=1$ the even-linear grammars and languages are obtained; while the case $k=0$ corresponds to 
the regular grammars and languages.
The derivation-trees of the $k$-rated linear grammars form pine tree shapes.
In this paper we investigate pumping lemmas for these 
languages also.
These new pumping lemmas work for regular languages as well, since every regular language
is $k$-rated linear for every non-negative rational $k$. In this way the words of a regular
language can be pumped in two places in a parallel way.
There are also extensions of linear grammars. A context-free grammar is said to be
$k$-linear if it has the form of a linear grammar
plus one additional rule of the form $S\rightarrow S_1S_2\ldots S_k$,
where none of the symbols $S_i$ may appear on the right-hand side of any
other rule, and $S$ may not appear in any other rule at all.
A language is said to be $k$-linear if it can be generated by a $k$-linear grammar,
and a language is said to be metalinear if it is $k$-linear for some positive integer $k$.
The metalinear language family is strictly between the linear and context-free
ones.
In this paper we also introduce a pumping lemma for not metalinear context-free
languages, which can be used to prove that the given language belongs to the class of
the metalinear languages.



\section{Preliminaries}

In this section we give some basic concepts and fix our notation.
Let $\mathbb N$ denote the non-negative integers and $\mathbb Q$ denote the
non-negative rationals through the paper.

A grammar is an ordered quadruple $G= ( N,V,S,H )$, where $N,V$
are the non-terminal and terminal alphabets. 
 $S \in N$ is the 
  initial letter. $H$ is a finite set of derivation rules. A rule is a pair
 written in the form $v \rightarrow w$ with $v\in (N
\cup V)^* N (N \cup V)^*$ and $w \in  (N \cup V)^*$.

Let $G$ be a grammar and $v,w \in (N \cup V)^*$. Then $v \Rightarrow w$ 
is a direct derivation if and only if there exist $v_1,v_2,v',w'
\in (N \cup V)^*$ such that $v = v_1 v' v_2$, $w = v_1 w' v_2$
and $v' \rightarrow w' \in H$. The transitive and reflexive closure of $\Rightarrow$ is
  denoted by $\Rightarrow^*$. 

The language generated by a grammar $G$ is 
$L(G) = \{ w | S
\Rightarrow^* w \wedge w \in V^* \}$.
Two grammars are equivalent if they generate the same
language modulo the empty word ($\lambda$). (From now on we do not care whether $\lambda  \in L$ or not.)

Depending on the possible structures of the derivation rules we
are interested in the following classes \cite{am-put2,hopul}.

\noindent {$\bullet$}  type 1, or context-sensitive (CS) grammars: for every
rule the next scheme holds: $uAv \rightarrow uwv$ with $A \in N$ and
$u,v,w \in (N \cup V)^*, w \ne \lambda$.

\noindent {$\bullet$}  type 2, or context-free (CF) grammars: for every
rule the next scheme holds: $A \rightarrow v$ with $A \in N$ and
$v \in (N \cup V)^*$.

\noindent {$\bullet$}  linear (Lin) grammars: each rule is one of the
next forms: $A\rightarrow v$, $A\rightarrow vBw$; where $A,B \in
N$ and $v,w \in V^*$.

\noindent {$\bullet$}  $k$-linear ($k$-Lin) grammars: it is a linear grammar
plus one additional rule of the form $S\rightarrow S_1S_2\ldots S_k$,
where $S_1,S_2,\ldots ,S_k\in N$, and none of the $S_i$ may appear on the
right-hand side of any other rule, and $S$ may not appear in any other rule at all.

\noindent {$\bullet$}  metalinear (Meta) grammars: A grammar is said to be
metalinear if it is $k$-linear for some positive integer $k$.

\noindent {$\bullet$}  $k$-rated linear ($k$-rLin) grammars: it is a linear grammar with the following property: there exists a rational number $k$ such that
for each rule of the form: $A\rightarrow vBw$: $\frac{|w|}{|v|} = k$ (where
$|v|$ denotes the length of $v$).

Specially with $k=1$:
\\
\noindent {$\bullet$}  even-linear (1-rLin) grammars. 

Specially with $k=0$: 

\noindent {$\bullet$}  type 3, or regular (Reg) 
 grammars: each
derivation rule is one of the following forms: $A\rightarrow w$,
$A\rightarrow wB$; where $A,B \in N$ and $w \in V^*$.


The language family regular/linear etc. contains all languages that can be generated by
regular/linear etc. grammars. 
We call a language $L$ fix-rated linear if there is a $k\in\mathbb Q$ such that
$L$ is $k$-rated linear. So the class of fix-rated linear languages includes all the
$k$-rated linear language families.
Moreover it is known by \cite{am-put2}, that for any value of $k\in\mathbb Q$
all regular languages are $k$-rated linear.

The hierarchy of the considered language classes can be seen in Fig. \ref{hier}.

\begin{figure}
\begin{picture}(300,100)(-70,-52)
\put (0,0){\oval(95,95)}
\put (0,0){\oval(75,75)}
\put (0,0){\oval(55,55)}
\put (0,0){\oval(35,35)}
\put (0,0){\oval(15,15)}
\put (30,40){\line(1,0){90}}
\put (24,29){\line(1,0){96}}
\put (16,18){\line(1,0){104}}
\put (8,7){\line(1,0){112}}
\put (0,-4){\line(1,0){120}}
\put(178,40){\makebox(0,0){\it Context-free languages}}
\put(174,29){\makebox(0,0){\it Metalinear languages}}
\put(164,18){\makebox(0,0){\it Linear languages}}
\put(185,7){\makebox(0,0){\it  Fix-rated linear languages}}
\put(165,-4){\makebox(0,0){\it Regular languages}}
\normalsize
\end{picture}
\caption{The hierarchy of some context-free language classes}\label{hier}
\end{figure}
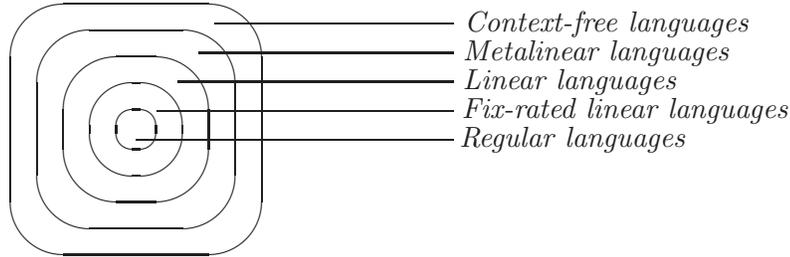

\noindent
Further, when we consider a special fixed value of $k$, then we will also use it as $k = \frac{g}{h}$, where $ g, h \in \mathbb N$ ($h\ne 0$) are relatively primes.

Now we present normal forms for the rules of linear, $k$-rated linear and so,
even-linear
and  regular grammars. 

The following fact is well-known:
\noindent Every linear grammar has an equivalent grammar in which all rules
are in forms of $A \rightarrow aB, A \rightarrow Ba, A \rightarrow
a$ with $a\in V, A,B\in N$. 

\begin{lemma}[Normal form for $k$-rated linear grammars]\label{k-l-nf}
Every $k$-rated \linebreak ($k=\frac{g}{h}$) linear grammar has an equivalent one in which
 for every rule of the form $A \to vBw$: $|w|=g$ and $|v|=h$ such that $g$ and $h$ are relatively primes and
 for all rules of the form $A \to u$ with $u\in V^*$: $|u| < g+h$ 
 holds.
\end{lemma}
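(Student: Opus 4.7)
For any rule $A \to vBw$ of the given grammar, the hypothesis $|w|/|v|=g/h$ together with $\gcd(g,h)=1$ forces $|v|=mh$ and $|w|=mg$ for a single non-negative integer $m$, where $m=0$ corresponds to a degenerate unit production $A \to B$. The desired normal form is exactly $m=1$ for every rule with a non-terminal right-hand side, together with $|u|<g+h$ for every purely terminal rule $A \to u$. The plan is to reach this form by three successive transformations, each preserving both the generated language and the $k$-rated property.

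\textbf{Chain rules and long inner rules.} First I would apply the standard unit-production elimination: for every pair $(A,B)$ with $A\Rightarrow^{*} B$ through a sequence of unit productions, add $A \to \alpha$ for each non-unit production $B \to \alpha$, and then discard every unit production. Each added rule is either purely terminal or of the form $vCw$ inherited from $B$, so the ratio $g/h$ is preserved, and afterwards every rule $A \to vBw$ has $m \geq 1$. For the rules with $m \geq 2$, I factor $v = v_1 v_2 \cdots v_m$ and $w = w_1 w_2 \cdots w_m$ with $|v_i|=h$ and $|w_i|=g$, introduce fresh nonterminals $A_1,\ldots,A_{m-1}$, and replace $A \to vBw$ by the $m$ rules $A \to v_1 A_1 w_m$, $A_i \to v_{i+1} A_{i+1} w_{m-i}$ for $1 \le i \le m-2$, and $A_{m-1} \to v_m B w_1$. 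Each of these has $|v'|=h$ and $|w'|=g$, and a direct induction on $m$ confirms that the $m$-step derivation in the new grammar yields exactly $v B w$.

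\textbf{Terminal rules.} For each terminal rule $A \to u$ with $|u| \geq g+h$, I would split $u = v u' w$ with $|v|=h$ and $|w|=g$, introduce a fresh nonterminal $A'$, and replace $A \to u$ by the pair $A \to v A' w$ and $A' \to u'$. The first new rule is already of the required shape and has ratio $g/h$; the second is a terminal rule whose length is $|u|-(g+h)<|u|$, so iterating the split terminates after at most $\lfloor |u|/(g+h)\rfloor$ rounds, leaving every terminal rule of length strictly less than $g+h$.

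\textbf{Main obstacle.} The only real subtlety is the indexing used when breaking up a long inner rule: the $m$ successive applications of the new rules must reassemble $vBw$ in the original left-to-right order rather than some shuffle of the blocks. Because a rule $X \to \alpha Y \beta$ attaches $\alpha$ and $\beta$ immediately adjacent to $Y$, the blocks introduced latest end up innermost, which forces the asymmetric indexing above (the $v_i$ in order, the $w_i$ in reverse). Once this ordering is fixed, the correctness of all three transformations, preservation of the $k$-rated property, and termination are all routine.
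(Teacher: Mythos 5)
Your proposal is correct and is essentially the paper's own argument spelled out in full: the paper's proof consists of the single sentence that ``longer rules can be simulated by shorter ones by the help of newly introduced nonterminals,'' which is exactly your block decomposition (your indexing, with the $v_i$ in order and the $w_i$ reversed so that the chain reassembles $vBw$, checks out), supplemented by the routine unit-production elimination and the splitting of long terminal rules.
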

\begin{proof}
It goes in the standard way: longer rules can be simulated by shorter ones by the help of newly introduced nonterminals.
\end{proof}

As special cases of the previous lemma we have:
\begin{remark} Every even-linear grammar has an equivalent grammar in
which all rules are in forms $A \rightarrow
aBb, A \rightarrow a$, $A \rightarrow \lambda$ ($A,B \in N, a,b\in V$).
\end{remark}
\begin{remark} Every regular language can be generated by grammar having only
rules of types $A \rightarrow aB , 
 A\rightarrow \lambda$ ($A,B \in N,
a \in V$).
\end{remark}

\newpage


Derivation trees are widely used graphical representations of derivations in context-free grammars. The root of the tree is a node labelled by the initial symbol $S$.
The terminal labelled nodes are leaves of the tree. The nonterminals, as the derivation 
continues from them,
have some children nodes.
Since there is a grammar in Chomsky normal form for every context-free grammar, every word
of a context-free language can be generated such that its derivation tree is a binary
tree.

In linear case, there is at most one non-terminal in every level of the tree. Therefore
the derivation can go only in a linear (sequential) manner. There is only one main branch of the derivation (tree); all the other branches terminate immediately.
Observing the derivations and derivation trees for linear grammars, they seem to be
highly related to the regular case. The linear (and so, specially, the even-linear and fixed linear) languages can be accepted by finite state
machines \cite{am-put1,russanka,nagyB}. Moreover the $k$-rated linear
languages are accepted by deterministic machines \cite{nagyB}.

By an analysis of the possible trees and iterations of nonterminals in a derivation (tree) one can obtain 
pumping (or iteration) lemmas.

Further in this section we recall some well-known iteration lemmas.

\noindent The most famous iteration lemma works for every context-free languages \cite{BARH}.
\begin{lemma}[Bar-Hillel lemma]
\label{BH}
Let a context-free language $L$ be given. Then there exists an integer $n \in\mathbb N$
such that any word $p\in L$ with $|p|\geq n$, admits a factorization
$p=uvwxy$ satisfying

1. $uv^iwx^iy\in L$ for all $i\in\mathbb N$

2. $|vx|>0$

3. $|vwx|\leq n$.
\end{lemma}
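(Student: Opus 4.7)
The plan is to reduce to Chomsky normal form and then apply a pigeonhole argument to a sufficiently long root-to-leaf path in a derivation tree. First I would replace $G$ by an equivalent context-free grammar in CNF, so that every internal node of a derivation tree has either two nonterminal children (from a rule $A \to BC$) or a single terminal child. Write $m = |N|$ for the number of nonterminals in this CNF grammar and set $n = 2^{m}$. A routine induction on tree height shows that any CNF derivation tree of height at most $m$ yields a word of length at most $2^{m-1} < n$, so every $p \in L$ with $|p| \geq n$ must be generated by a derivation tree of height strictly greater than $m$.

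In such a tree for $p$, I would select a root-to-leaf path of maximum length. Along this path there are more than $m$ internal, nonterminal-labelled nodes, so by pigeonhole some nonterminal $A$ must repeat. The crucial choice is to take a repetition among the lowest $m+1$ internal nodes on the path, i.e.\ the two occurrences of $A$ closest to the leaf. This guarantees that the subtree rooted at the \emph{upper} $A$ has height at most $m+1$ and therefore yields a terminal string of length at most $2^{m} = n$; that string becomes the factor $vwx$.

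The factorization $p = uvwxy$ is then read off the tree: $u$ and $y$ are the yields of the tree to the left and right of the upper subtree; $w$ is the yield of the lower $A$'s subtree; and $v, x$ are the yields inside the upper subtree but outside the lower one. From the derivations $A \Rightarrow^* vAx$ (replacing the lower $A$ by its tree context inside the upper one) and $A \Rightarrow^* w$ (the lower $A$'s subtree), we obtain $S \Rightarrow^* u v^{i} w x^{i} y$ for every $i \in \mathbb N$, giving condition~1. Condition~3 is the subtree height bound from the previous paragraph. For condition~2, the CNF rule applied at the upper $A$ has the form $A \to BC$; the lower $A$ lies in exactly one of the two subtrees, and the sibling subtree yields a nonempty terminal string that contributes to $v$ or to $x$, so $|vx| > 0$.

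The only genuinely delicate point, really a matter of bookkeeping rather than difficulty, is coordinating the two choices: the occurrences of $A$ must be close enough to the leaves that the upper subtree is shallow (yielding $|vwx| \leq n$), yet the CNF branching at the upper $A$ must force a nonempty contribution outside the lower subtree (yielding $|vx| > 0$). Selecting the repetition among the bottom $m+1$ internal nodes of a longest root-to-leaf path is the standard device that secures both requirements simultaneously, and this is where the care of the argument is concentrated.
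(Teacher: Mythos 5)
Your proof is correct and complete: the reduction to Chomsky normal form, the bound $n=2^{|N|}$ via the height/yield estimate, the pigeonhole repetition among the bottom $|N|+1$ internal nodes of a \emph{longest} root-to-leaf path (which is exactly what makes the upper subtree shallow and gives $|vwx|\leq n$), and the observation that the CNF branching at the upper occurrence forces $|vx|>0$ are all in order. Note, however, that the paper does not prove this lemma at all --- it is recalled as a known result with a citation to Bar-Hillel, Perles and Shamir --- so there is no in-paper argument to compare against; what you have written is the standard textbook proof (in the style of Hopcroft--Ullman) and it is sound.
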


\begin{example}
Let $L=\{a^ib^ic^i~|~i\in\mathbb N\}$. It is easy
to show with the Bar-Hillel lemma that the language $L$ is not context-free.
\end{example}

The next lemma works for linear languages \cite{hopul}.
\begin{lemma}[Pumping lemma for linear languages]
\label{LI}
Let $L$ be
 a linear language. Then there exists an integer $n$
such that any word $p\in L$ with $|p|\geq n$, admits a factorization
$p=uvwxy$ satisfying

1. $uv^iwx^iy\in L$ for all integer $i\in\mathbb N$

2. $|vx|>0$

3. $|uvxy|\leq n$.
\end{lemma}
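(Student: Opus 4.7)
The plan is to work in the linear normal form recalled just before Lemma~\ref{k-l-nf}: every linear language admits an equivalent grammar $G = (N, V, S, H)$ in which each rule has one of the shapes $A \to aB$, $A \to Ba$, or $A \to a$ with $a \in V$ and $A, B \in N$. The essential feature of this form is that a derivation $S \Rightarrow^* p$ of a non-empty word uses exactly $|p|$ steps, and at every non-final step $t$ the current sentential form is $\alpha_t N_t \beta_t$ with $\alpha_t, \beta_t \in V^*$, $N_t \in N$, and $|\alpha_t| + |\beta_t| = t$; each step either appends a single terminal to the left of $N_t$ or to its right.

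I would set $n := |N| + 1$ and take an arbitrary $p \in L$ with $|p| \geq n$, fixing a derivation whose non-terminal sequence is $S = N_0, N_1, \ldots, N_{|p|-1}$. Since $|p| \geq |N| + 1$, the initial segment $N_0, N_1, \ldots, N_{|N|}$ already contains $|N|+1$ entries drawn from $N$, so by the pigeonhole principle there exist indices $0 \leq i < j \leq |N|$ with $N_i = N_j =: A$. Because steps $i+1, \ldots, j$ only extend the two ends of the sentential form, one can write $\alpha_j = \alpha_i v$ and $\beta_j = x \beta_i$ for some $v, x \in V^*$. Letting $w$ denote the terminal string that the occurrence of $A$ at step $j$ eventually produces, I obtain the factorization $p = uvwxy$ with $u = \alpha_i$ and $y = \beta_i$.

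From here each condition is immediate: (1) iterating the sub-derivation $A \Rightarrow^{j-i} v A x$ any number of times before completing with $A \Rightarrow^* w$ yields $u v^\ell w x^\ell y \in L$ for every $\ell \in \mathbb N$; (2) $|vx| = j - i \geq 1$ since $i < j$; and (3) the counting identity gives $|uvxy| = |\alpha_i| + |v| + |x| + |\beta_i| = i + (j - i) = j \leq |N| < n$.

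The one point that needs care is condition~3, which is exactly what distinguishes this lemma from Bar-Hillel: the pumped factors must sit near the two opposite ends of $p$, not merely inside a single window of length $n$. This forces the pigeonhole argument to be applied to the \emph{initial} segment of the non-terminal sequence rather than to some arbitrary window of length $|N|+1$, and it is precisely the linear normal form that guarantees $|\alpha_t|+|\beta_t| = t$, thereby bounding $|uvxy|$ by the index of the second occurrence of the repeated non-terminal.
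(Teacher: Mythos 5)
Your proof is correct. Note that the paper does not prove Lemma~\ref{LI} at all --- it is quoted as a known result from \cite{hopul} --- but your argument (one-terminal-per-step normal form, pigeonhole applied to the first $|N|+1$ nonterminals of the derivation so that the repeated nonterminal $A$ occurs within distance $|N|$ of the root, which is exactly what forces $|uvxy|\leq n$ rather than the Bar-Hillel bound $|vwx|\leq n$) is the standard one, and it is essentially the same technique the paper itself uses to prove Theorem~\ref{THM-1} for the $k$-rated case.
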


\begin{example}
It is easy to show by using Lemma \ref{LI} that the language \\
$L=\{a^ib^ic^jd^j | i,j\in\mathbb N\}$ is not linear.
\end{example}

In \cite{Geza} there is a pumping lemma for non-linear context-free languages
that can also be effectively used for some languages.
\begin{lemma}[Pumping lemma for non-linear context-free languages]
\label{GE}
Let $L$ be a non-linear context-free language. Then there exist infinite
many words $p\in L$ which admit a factorization $p=rstuvwxyz$
satisfying

1. $rs^itu^ivw^jxy^jz\in L$ for all integer $i,j\geq0$

2. $|su|\neq0$

3. $|wy|\neq0$.
\end{lemma}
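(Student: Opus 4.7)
The plan is to locate, inside some derivation of a sufficiently long word of $L$, a node from which the derivation branches into two subtrees each of which contains its own pumpable recursion; the two pumps will supply the $(s,u)$ and $(w,y)$ factors.

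I would start by fixing a CFG $G = (N, V, S, H)$ for $L$ in Chomsky Normal Form. The crucial ingredient is the following \emph{branching claim}: since $L$ is context-free but not linear, there exist a nonterminal $P \in N$ reachable by a derivation $S \Rightarrow^* r P z$ with $r, z \in V^*$, and a derivation $P \Rightarrow^* \alpha Q_1 \beta Q_2 \gamma$ with $\alpha, \beta, \gamma \in V^*$ and $Q_1, Q_2 \in N$, such that both $Q_1$ and $Q_2$ generate infinite sublanguages in $G$. The intuition is that if no such $P$ existed, then along every derivation only one descendant of any given node could seed further recursion, and the side branches (deriving only finitely many strings) could be replaced by terminal productions, yielding an equivalent linear grammar and contradicting the non-linearity of~$L$.

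Granted the branching claim, I would apply the standard observation that a nonterminal generating an infinite language admits a self-embedding derivation: there exist terminal strings $s, u, t$ with $|su| > 0$, $Q_1 \Rightarrow^* s Q_1 u$ and $Q_1 \Rightarrow^* t$, and analogously $w, y, x$ with $|wy| > 0$, $Q_2 \Rightarrow^* w Q_2 y$ and $Q_2 \Rightarrow^* x$. Chaining these derivations, for every $i, j \in \mathbb{N}$ we get
\[
S \Rightarrow^* r P z \Rightarrow^* r \alpha Q_1 \beta Q_2 \gamma z \Rightarrow^* r \alpha s^i Q_1 u^i \beta w^j Q_2 y^j \gamma z \Rightarrow^* r \alpha s^i t u^i \beta w^j x y^j \gamma z.
\]
Absorbing $\alpha$ into the leftmost factor (call it $r$ again), $\beta$ into a middle factor $v$, and $\gamma$ into the rightmost factor $z$, we obtain $p_{i,j} = r s^i t u^i v w^j x y^j z \in L$ satisfying conditions~1--3 of the statement. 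Since $|su| > 0$ and $|wy| > 0$, letting $(i,j)$ range over $\mathbb{N}^2$ produces infinitely many distinct words of $L$, each admitting such a factorization.

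The main obstacle is the branching claim itself. The pumping half of the argument is routine once the claim is in hand, but proving that the absence of a doubly-recursive nonterminal forces $L$ to be linear requires a careful structural argument on derivation trees — essentially identifying, level by level, a unique "productive spine" of nonterminals and showing that all side branches have bounded yield, so that the grammar can be explicitly rewritten in linear form.
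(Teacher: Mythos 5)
Your proposal follows essentially the same route as the paper: the lemma is quoted here from an earlier reference, but the paper proves the generalization (Theorem~\ref{UJ}, whose case $k=1$ is exactly this statement) by the identical strategy — exhibit a reachable sentential form $\alpha A_0\beta_0 A_1\beta_1$ in which \emph{both} nonterminals generate infinite languages, then pump under each independently. The ``branching claim'' you single out as the crux is precisely the step the paper asserts in one line (``Because $L$ is not $k$-linear, there exists\dots''), and your sketch of why its failure would let one substitute out all finite-yield nonterminals and obtain a linear grammar is the right justification. One small technical slip: a nonterminal $Q_1$ with $L(G_{Q_1})$ infinite need not satisfy $Q_1 \Rightarrow^* sQ_1u$ itself — the recursion may occur only at a proper descendant (e.g.\ $Q_1\to A$, $A\to aAb\mid c$). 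The paper sidesteps this by applying the Bar--Hillel lemma to the \emph{language} $L(G_{Q_1})$, which directly yields a factorization $a\,b^i c\,d^i e$ with $|bd|>0$; your version is repaired the same way, or by descending to the recursive nonterminal and absorbing the surrounding terminal strings into the neighbouring factors. With that adjustment the argument is sound and matches the paper's.
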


\begin{example}
Let $$H\subseteq\{1^2,2^2,3^2,\ldots \}$$ be an infinite set, and let
$$L_H=\{a^kb^ka^lb^l\}~|~
k,l\geq 1;~k\in H~or~l\in H\} \cup \{a^mb^m~|~m\geq1\}.$$

The language $L_H$ satisfies the Bar-Hillel condition. Therefore we can not
apply the Bar-Hillel Lemma to show that $L_H$ is not context-free.
However the $L_H$ language does not satisfy the condition of the
pumping lemma for linear languages. Thus $L_H$ is not linear.
At this point we can apply Lemma \ref{GE}, and the
language  $L_H$ does not satisfy its condition. This means $L_H$ is not context-free.
\end{example}

Now we recall the well-known iteration lemma for regular case (see, for instance, \cite{hopul}).
\begin{lemma}[Pumping lemma for regular languages]\label{reg-pump}
Let $L$ be
 a regular language. Then there exists an integer $n$
such that any word $p\in L$ with $|p|\geq n$, admits a factorization
$p=uvw$ satisfying

1. $uv^iw\in L$ for all integer $i\in\mathbb N$

2. $|v|>0$

3. $|uv|\leq n$.

\end{lemma}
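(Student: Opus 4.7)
The plan is to work with the normal form for regular grammars recalled in the preceding Remark --- where every rule has the shape $A\to aB$ or $A\to\lambda$ --- and then to apply a pigeonhole argument to the sequence of nonterminals that appear along a derivation. I would fix such a grammar $G=(N,V,S,H)$ for $L$ and take the pumping constant to be $n=|N|$.

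Given $p\in L$ with $|p|=m\geq n$, I would fix some derivation of $p$ in the form
$$S=A_0\Rightarrow a_1A_1\Rightarrow a_1a_2A_2\Rightarrow\cdots\Rightarrow a_1\cdots a_mA_m\Rightarrow a_1\cdots a_m=p,$$
so that each step produces exactly one terminal symbol. The prefix $A_0,A_1,\ldots,A_n$ contains $n+1$ entries drawn from the $n$-element set $N$, so the pigeonhole principle supplies indices $0\leq i<j\leq n$ with $A_i=A_j$. The intended factorization is $u=a_1\cdots a_i$, $v=a_{i+1}\cdots a_j$, $w=a_{j+1}\cdots a_m$.

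The three conditions then follow directly. Condition~3 is just $|uv|=j\leq n$; condition~2 is $|v|=j-i>0$, because $i<j$; and condition~1 comes from the observation that the sub-derivation from $A_i$ to $A_j=A_i$ outputs exactly $v$ while returning to the same nonterminal, so it may be skipped or iterated any number of times. This gives $A_i\Rightarrow^* v^kA_i\Rightarrow^* v^kw$ and hence $uv^kw\in L$ for every $k\in\mathbb{N}$.

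I do not foresee a real obstacle: this is essentially the textbook DFA argument translated into the grammar setting. The single point deserving a little care is the justification that the derivation really has the one-terminal-per-step shape displayed above, which is exactly why the normal form Remark is invoked; thereafter the argument is a routine pigeonhole application.
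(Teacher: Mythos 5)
Your proof is correct. The paper itself only recalls this lemma without proof (citing Hopcroft--Ullman), so there is no in-paper argument to compare against; your pigeonhole argument on a regular grammar in the normal form $A\to aB$, $A\to\lambda$ is the standard one, and it is exactly the $k=0$ specialization of the method the paper uses to prove its Theorems \ref{THM-1} and \ref{THM-2} (repetition of a nonterminal among the first $|N|+1$ sentential forms, then deleting or iterating the loop).
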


\begin{example}
By the previous lemma one can easily show that the language $\{a^nb^n| n\in \mathbb N\}$
is not regular.
\end{example}

Pumping lemmas are strongly connected to derivation trees, therefore
they works for context-free languages (and for some special subclasses of the context-free languages).

In the next section 
 we present pumping lemmas for the $k$-rated linear languages and for the not metalinear context-free languages.

\section{Main results}

Let us consider a $k$-rated linear grammar.
Based on the normal form (Lemma \ref{k-l-nf}) every word of a $k = \frac{g}{h}$-rated linear
language can be generated by a %
 `pine-tree' shape derivation tree (see Fig. \ref{k-tree}).

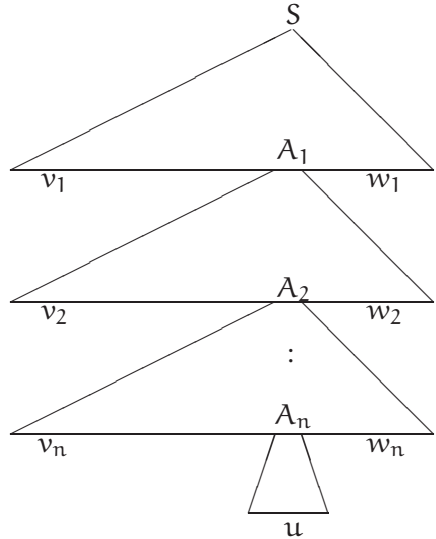
\begin{figure}[httb]
\begin{picture}(300,200)(0,0)
\put (40,40){\line(2,1){100}}
\put (40,90){\line(2,1){100}}
\put (40,90){\line(1,0){160}}
\put (40,140){\line(1,0){160}}
\put (40,140){\line(2,1){106}}
\put (40,40){\line(1,0){160}}
\put (200,40){\line(-1,1){50}}
\put (200,90){\line(-1,1){50}}
\put (200,140){\line(-1,1){53}}
\put (130,10){\line(1,3){10}}
\put (130,10){\line(1,0){30}}
\put (160,10){\line(-1,3){10}}
\put(120,85){\makebox(0,0){$v_2$ \qquad \qquad \qquad \qquad \qquad $w_2$}}
\put(120,35){\makebox(0,0){$v_n$ \qquad \qquad \qquad \qquad \qquad $w_n$}}
\put(120,135){\makebox(0,0){$v_1$ \qquad \qquad \qquad \qquad \qquad $w_1$}}
\put(147,4){\makebox(0,0){$ u $}}
\put(147,199){\makebox(0,0){$ S $}}
\put(147,147){\makebox(0,0){$ A_1 $}}
\put(147,95){\makebox(0,0){$ A_2 $}}
\put(146,70){\makebox(0,0){$ : $}}
\put(147,47){\makebox(0,0){$ A_n $}}
\normalsize
\end{picture}
\caption{A `pine-tree' shape derivation tree in a fix-rated linear grammar}
\label{k-tree}\end{figure}
\noindent


Now we are ready to present our pumping lemmas for these languages.

\begin{theorem}\label{THM-1}
Let $L$ be
 a $(\frac{g}{h}=k)$-rated linear language. Then there exists an integer $n$
such that any word $p\in L$ with $|p|\geq n$, admits a factorization
$p=uvwxy$ satisfying

1. $uv^iwx^iy\in L$ for all integer $i\in\mathbb N$

2. $0< |u|, |v| \leq n \frac{h}{g+h}$

3. $0< |x|, |y| \leq n \frac{g}{g+h}$ 

4. $\frac{|x|}{|v|} = \frac{|y|}{|u|} = \frac{g}{h} = k$.
\end{theorem}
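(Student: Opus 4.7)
The plan is to combine the normal form of Lemma \ref{k-l-nf} with the standard pigeonhole argument on the sequence of nonterminals along the single branching spine of the pine-tree shaped derivation illustrated in Fig. \ref{k-tree}.

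First I would fix an equivalent grammar $G = (N,V,S,H)$ in the normal form of Lemma \ref{k-l-nf}, so every non-terminal rule $A \to vBw$ satisfies $|v|=h$ and $|w|=g$, and every terminal rule $A \to t$ satisfies $|t|<g+h$. Setting $c=|N|$, I would choose the constant $n=(c+1)(g+h)$. Any $p\in L$ with $|p|\geq n$ then forces a derivation
$$S \Rightarrow v_1 A_1 w_1 \Rightarrow v_1 v_2 A_2 w_2 w_1 \Rightarrow \cdots \Rightarrow v_1\cdots v_m A_m w_m\cdots w_1 \Rightarrow v_1\cdots v_m\, t\, w_m\cdots w_1 = p$$
with $m\geq c+1$, because each non-terminal step contributes exactly $g+h$ to the total length and the final terminal step less than $g+h$.

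Next I would apply pigeonhole to $A_1,\ldots,A_{c+1}$ to obtain indices $1\leq i<j\leq c+1$ with $A_i = A_j$, and take the factorization
$$u=v_1\cdots v_i,\quad v=v_{i+1}\cdots v_j,\quad w=v_{j+1}\cdots v_m\, t\, w_m\cdots w_{j+1},\quad x=w_j\cdots w_{i+1},\quad y=w_i\cdots w_1.$$
Since $A_i=A_j$, the subderivation between them can be iterated or skipped, giving $uv^r w x^r y\in L$ for every $r\in\mathbb N$, which is condition~1.

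The remaining conditions are immediate consequences of the per-level lengths dictated by the normal form: $|u|=ih$, $|y|=ig$, $|v|=(j-i)h$, $|x|=(j-i)g$. The ratio condition~4 then follows automatically, positivity in conditions~2 and~3 follows from $i\geq 1$ and $j>i$, and the upper bounds $|u|,|v|\leq (c+1)h = n\frac{h}{g+h}$ and $|x|,|y|\leq (c+1)g = n\frac{g}{g+h}$ follow from $i,j-i\leq c+1$. I do not expect a real obstacle here; the only delicate point is calibrating $n$ so that the forced length of the derivation guarantees the pigeonhole repetition \emph{within} the initial segment $A_1,\ldots,A_{c+1}$, which is precisely the role of the factor $(c+1)$ in the choice $n=(c+1)(g+h)$.
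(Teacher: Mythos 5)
Your proposal is correct and follows essentially the same route as the paper: normal form from Lemma \ref{k-l-nf}, the constant $n=(|N|+1)(g+h)$, pigeonhole on the nonterminals appearing in the first $|N|+1$ steps of the linear derivation, and cutting the word at the two occurrences of the repeated nonterminal. Your length bookkeeping ($|u|=ih$, $|y|=ig$, $|v|=(j-i)h$, $|x|=(j-i)g$) is in fact more explicit than the paper's, but the argument is the same.
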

\begin{proof}
Let $G = (N,V,S,H)$ be a $k$-rated linear grammar in normal form that
generates the language $L$. Then let $n= (|N|+1) \cdot (g+h)$.
In this way any word $p$ with length at least $n$ cannot be generated without
any repetition of a nonterminal 
 in the sentential form.
Moreover, by the pigeonhole principle, there is a nonterminal in the derivation which occurs in the sentential forms during the first $|N|$ steps 
 of the derivation and after the first occurrence it occurs also in the next $|N|$ 
sentential forms.
Considering the first two occurrences of this nonterminal $A$ in the derivation tree, the word $p$ can be partitioned to five parts in the following way.
Let $u$ and $y$ be the prefix and suffix (respectively) generated by the first steps till the first occurrence of $A$.
Let $v$ and $x$ be the subwords that are generated from the first occurrence of $A$ till it appears secondly in the sentential form. Finally let $w$ be the
subword that is generated from the second occurrence of $A$ in the derivation.
 (See also Fig. \ref{pump1}.)
 In this way the conditions 2, 3 and 4 of the theorem are fulfilled for the lengths of the partitions.
 Now let us consider the derivation steps between the first two occurrences of $A$.
 They can be omitted from the derivation; in this way the word $uwy$ is obtained.
 This sequence of steps can also be repeated any time, in this way the words of the form
 $uv^iwx^iy$ are obtained for any $i\in\mathbb N$. Thus the theorem is proved.
\end{proof}

\begin{figure}
 \begin{picture}(300,150)(0,30)
\put (40,141){\line(4,1){102}}
\put (200,141){\line(-2,1){50}}
\put (40,126){\line(1,0){100}}
\put (150,126){\line(1,0){50}}
\put (40,126){\line(0,1){15}}
\put (200,126){\line(0,1){15}}
\put (40,100){\line(4,1){102}}
\put (200,100){\line(-2,1){50}}
\put (200,60){\line(-2,1){50}}
\put (40,40){\line(1,0){160}}
\put (40,60){\line(4,1){102}}
\put (40,40){\line(0,1){20}}
\put (200,40){\line(0,1){20}}
\put (40,85){\line(0,1){15}}
\put (200,85){\line(0,1){15}}
\put (40,85){\line(1,0){100}}
\put (150,85){\line(1,0){50}}
\put(120,80){\makebox(0,0){$v$ \qquad \qquad \qquad \qquad \qquad \qquad \quad $x$}}
\put(132,33){\makebox(0,0){$w$}}
\put(120,120){\makebox(0,0){$u$ \qquad \qquad \qquad \qquad \qquad \qquad \quad $y$}}
\put(146,168){\makebox(0,0){$ S $}}
\put(146,87){\makebox(0,0){$ A $}}
\put(146,127){\makebox(0,0){$ A $}}
\put(147,64){\makebox(0,0){$ : $}}
\put(147,107){\makebox(0,0){$ : $}}
\put(147,147){\makebox(0,0){$ : $}}
\normalsize
\end{picture}
  \caption{Pumping the subwords between the two occurrences of the non-terminal $A$.}\label{pump1}
\end{figure}
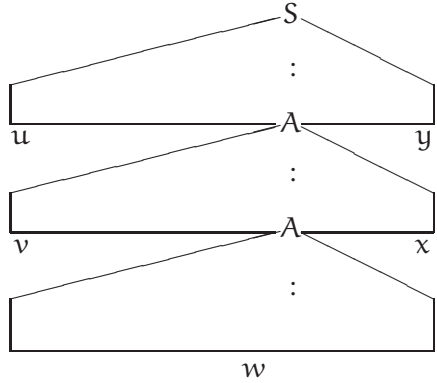

\begin{theorem}\label{THM-2}
Let $L$ be
 a $(\frac{g}{h}=k)$-rated linear language. Then there exists an integer $n$
such that any word $p\in L$ with $|p|\geq n$, admits a factorization
$p=uvwxy$ satisfying

1. $uv^iwx^iy\in L$ for all integer $i\in\mathbb N$

2. $0< |v| \leq n \frac{h}{g+h}$ 

3. $0< |x| \leq n \frac{g}{g+h}$

4. $0< |w| \leq n $

5. $\frac{|x|}{|v|} = \frac{|y|}{|u|} = \frac{g}{h} = k$.
\end{theorem}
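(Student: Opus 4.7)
The plan is to imitate the argument for Theorem~\ref{THM-1}, but instead of choosing the \emph{first} repetition of a nonterminal along the spine of the pine-tree derivation, I would choose a repetition that occurs \emph{near the bottom} of the spine. This forces the subword $w$ hanging below the second occurrence to be short, which is exactly what condition~4 of the statement demands.

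Concretely, I would work with a $k=g/h$-rated grammar $G=(N,V,S,H)$ in the normal form of Lemma~\ref{k-l-nf} and set $n := (|N|+2)(g+h)$, slightly larger than the constant used for Theorem~\ref{THM-1}. For $p\in L$ with $|p|\geq n$, the pine-tree derivation has the shape
\[
  S = A_0 \Rightarrow v_1 A_1 w_1 \Rightarrow \cdots \Rightarrow v_1\cdots v_m A_m w_m\cdots w_1 \Rightarrow v_1\cdots v_m u_{\mathrm{fin}} w_m\cdots w_1 = p,
\]
with $|v_i|=h$, $|w_i|=g$, and $|u_{\mathrm{fin}}|<g+h$. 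The assumption $|p|\geq n$ forces $m\geq |N|+1$. Applying the pigeonhole principle to the $|N|+1$ nonterminals $A_{m-|N|-1},\ldots,A_{m-1}$ (all strictly above the bottom $A_m$) gives indices $m-|N|-1 \leq j < j' \leq m-1$ with $A_j=A_{j'}$. I then factor $p=uvwxy$ as in Theorem~\ref{THM-1}: $u=v_1\cdots v_j$, $v=v_{j+1}\cdots v_{j'}$, $x=w_{j'}\cdots w_{j+1}$, $y=w_j\cdots w_1$, and $w$ is the terminal string generated below $A_{j'}$. Since $A_j=A_{j'}$, the substitution $A_j\Rightarrow^* vA_{j'}x = vA_jx$ can be iterated or omitted freely, yielding $uv^iwx^iy\in L$ for every $i\in\mathbb N$.

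Checking the numerical conditions is then routine. Every expansion step contributes exactly $h$ to the left part and $g$ to the right part, so $|x|/|v|=|y|/|u|=g/h=k$ holds automatically. From $1\leq j'-j\leq |N|+1$ I get $0<|v|=(j'-j)h\leq (|N|+1)h\leq n\,h/(g+h)$ and $0<|x|=(j'-j)g\leq n\,g/(g+h)$. For the $w$-bound: because $j'\leq m-1$ at least one more expansion step (contributing $v_{j'+1}w_{j'+1}$ of total length $g+h\geq 1$) hangs below $A_{j'}$, so $|w|>0$; on the other hand $|w|=(m-j')(g+h)+|u_{\mathrm{fin}}|\leq |N|(g+h)+(g+h-1)<n$.

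The only real obstacle I expect is ensuring the strict positivity $|w|>0$. If the pigeonhole range were the last $|N|+1$ nonterminals \emph{including} $A_m$, a repetition with $j'=m$ combined with a terminating rule $A_m\to\lambda$ would leave $w$ empty. Restricting the pigeonhole to indices $\leq m-1$ and paying one extra factor of $g+h$ in the definition of $n$ is the clean way around this; alternatively one can first bring the grammar to $\lambda$-free form, which the paper already licenses since it works modulo $\lambda$. Apart from this bookkeeping point, the argument is a direct adaptation of the proof of Theorem~\ref{THM-1}.
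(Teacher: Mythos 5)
Your proof is correct and follows essentially the same route as the paper: both arguments pick, by pigeonhole, a repeated nonterminal among the last $|N|+1$ positions of the derivation spine so that the subword $w$ below the second occurrence is short, and then pump the segment between the two occurrences. Your version is in fact slightly more careful than the paper's, since by excluding the bottommost nonterminal $A_m$ from the pigeonhole window (at the cost of one extra factor $g+h$ in $n$) you explicitly guarantee $|w|>0$, a point the paper's proof leaves implicit.
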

\begin{proof}
Let $G = (N,V,S,H)$ be a $k$-rated linear grammar in normal form that
generates the language $L$. Then let $n= (|N|+1) \cdot (g+h)$.
In this way any word $p$ with length at least $n$ cannot be generated without
any repetition of a nonterminal 
 in the sentential form.
Moreover there is a nonterminal $A$ in the derivation which occurs twice among the non-terminals of the last $|N+1|$ sentential
forms of the derivation. 
Considering these last two occurrences of 
 $A$ in the derivation tree the word $p$ can be partitioned to five parts in the following way.
Let $u$ and $y$ be the prefix and suffix (respectively) generated from the first steps till that occurrence
of $A$ which is the last but one during the derivation. Let $v$ and $x$ be the subwords that are generated by the steps between 
 the last two occurrences of $A$. 
Finally let $w$ be the
subword that is generated from the last occurrence of $A$ in the derivation.
 In this way the conditions 2, 3, 4 and 5 are fulfilled for the lengths of the partitions.
 Now let us consider the derivation steps between the these two occurrences of $A$.
 They can be omitted from the derivation; in this way the word $uwy$ is obtained.
 This sequence of steps can also be repeated any time, in this way the words of the form
 $uv^iwx^iy$ are obtained for any $i\in\mathbb N$. Thus the theorem is proved.
\end{proof}

\begin{remark}
In case of
 $k=0$ 
the previous theorems give the well-known pumping lemmas for regular languages.
\end{remark}

Now we are presenting an iteration lemma for another special subclass of the
context-free language family.

\begin{theorem}
\label{UJ}
Let $L$ be a context-free language which does not belong to
any $k$-linear language for a given positive integer $k$. Then there exist infinite
many words $w\in L$ which admit a factorization
$w=uv_0w_0x_0y_0\ldots v_kw_kx_ky_k$
satisfying

1. $uv_0^{i_0}w_0x_0^{i_0}y_0\ldots v_k^{i_k}w_kx_k^{i_k}y_k\in L$
for all integer $i_0,\ldots ,i_k\geq0$

2. $|v_jx_j|\neq0$ for all $0\leq j\leq k$.
\end{theorem}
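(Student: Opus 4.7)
The approach is to generalize the argument behind Lemma~\ref{GE} (which is essentially the $k=1$ case) by counting \emph{independent} pumpable branches in the derivation tree. I begin by fixing a context-free grammar $G=(N,V,S,H)$ for $L$ in Chomsky normal form and setting $m=|N|$. For a derivation tree $T$ of a word $p\in L$, call an internal node \emph{tall} if the subtree rooted at it has height exceeding $m$. The pigeonhole principle guarantees that every tall subtree contains a repeated nonterminal on a root-to-leaf path, and hence admits a standard Bar-Hillel factorization of its yield as $\beta\gamma\delta$ with $|\beta\delta|>0$. Define the \emph{branching skeleton} of $T$ to be the binary tree obtained by recording only the tall nodes at which the derivation genuinely splits (both children are tall) together with the maximal tall subtrees below that split no further; the leaves of the skeleton are exactly these maximal tall non-splitting subtrees.

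The heart of the proof is the following claim, used in contrapositive form: \emph{if, for all but finitely many $p\in L$, every derivation tree of $p$ has a branching skeleton with at most $k$ leaves, then $L$ is $k$-linear.} Intuitively, such uniformly bounded-branching derivations can be reorganized so that all the splittings are pushed to the top, giving precisely the shape of a $k$-linear grammar: one start rule $S\to S_1 S_2 \ldots S_k$ followed by at most $k$ independent linear spines. Granted this claim, the hypothesis that $L$ is not $k$-linear yields infinitely many words each of whose derivation trees has a branching skeleton with at least $k+1$ leaves, and I fix one such tree $T_p$ per word.

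Given such a tree $T_p$, I label its $k+1$ leftmost skeleton leaves $\ell_0,\ldots,\ell_k$ from left to right, and apply the Bar-Hillel construction independently inside each tall subtree rooted at $\ell_j$ to factor the corresponding yield as $v_j w_j x_j$ with $|v_j x_j|>0$. I take $u$ to be the portion of $p$ generated strictly to the left of $\ell_0$'s subtree, $y_j$ for $0\leq j<k$ to be the portion generated strictly between the subtrees of $\ell_j$ and $\ell_{j+1}$, and $y_k$ to be the suffix strictly to the right of $\ell_k$'s subtree. Reading the yield of $T_p$ left to right then gives exactly the factorization $p = u v_0 w_0 x_0 y_0 \ldots v_k w_k x_k y_k$, from which condition~2 is immediate. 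Condition~1 follows from the fact that the tall subtrees below $\ell_0,\ldots,\ell_k$ are vertex-disjoint: the Bar-Hillel pumping inside each one can be performed independently with its own exponent, so for any nonnegative integers $i_0,\ldots,i_k$ the string $u v_0^{i_0} w_0 x_0^{i_0} y_0 \ldots v_k^{i_k} w_k x_k^{i_k} y_k$ is the yield of a legitimate derivation of $G$ and hence lies in $L$.

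The hard part will be establishing the core claim, namely the conversion of a uniform bound on the number of skeleton leaves into an actual $k$-linear grammar for $L$. This step requires enumerating the finitely many possible shapes of bounded-complexity skeletons and simulating each by a component of a $k$-linear grammar, while respecting the structural restriction that the start symbol $S$ may appear only on the left-hand side of $S\to S_1 S_2 \ldots S_k$ and that each $S_i$ may appear only at the root of its own linear derivation; all other components of the argument, in particular the independence of the Bar-Hillel pumps across vertex-disjoint subtrees, are routine derivation-tree surgery in the tradition of Lemmas~\ref{BH}, \ref{LI}, and \ref{GE}.
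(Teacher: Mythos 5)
Your overall strategy is the same as the one the paper uses: locate $k+1$ pairwise disjoint, left-to-right ordered pieces of a derivation tree, each generating an infinite language, and run the Bar-Hillel pump independently inside each. The second half of your argument (the surgery on vertex-disjoint tall subtrees and the independence of the $k+1$ pumps) is routine and correct; it corresponds to the paper's step of applying Lemma \ref{BH} to each $L(G_{A_l})$ inside $\{\alpha\}L(G_{A_0})\{\beta_0\}\cdots L(G_{A_k})\{\beta_k\}\subseteq L$.

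The difficulty is the step you yourself flag as ``the hard part'' and leave unproved: the claim that if (almost) every word has a derivation tree whose branching skeleton has at most $k$ leaves, then $L$ is $k$-linear. This claim is false, for exactly the structural reason you mention in passing but do not confront: in a $k$-linear grammar the single $k$-fold split must occur at the start symbol, with nothing generated above or around it, whereas bounding the number of branch points does not push the split to the top. Concretely, take $L=\{a^n c^i d^i e^j f^j b^n \mid n,i,j\ge 0\}$ with grammar $S\to aSb\mid CD$, $C\to cCd\mid\lambda$, $D\to eDf\mid\lambda$. Every derivation tree has a skeleton with at most two leaves, yet $L$ is not $2$-linear: if $L=L_1L_2$ with $L_1,L_2$ linear, an exchange argument on where $a^nc^nd^ne^nf^nb^n$ is cut shows either that a word with unequal numbers of outer $a$'s and $b$'s lies in $L$, or that one factor is $\{\lambda\}$ and $L$ itself is linear --- impossible, since $L\cap c^*d^*e^*f^*=\{c^id^ie^jf^j\}$ is not linear. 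So your contrapositive does not produce $k+1$ skeleton leaves here; worse, no word of this $L$ admits three disjoint, ordered, independently pumpable pairs at all, since the only pumps are $(a^s,b^s)$ (whose two halves bracket the entire word), $(c^s,d^s)$ and $(e^s,f^s)$. You should know that the paper's own proof rests on the exactly analogous unproved assertion --- that non-$k$-linearity yields a sentential form $\alpha A_0\beta_0\cdots A_k\beta_k$ with all $L(G_{A_l})$ infinite --- and that assertion fails on the same grammar, where no sentential form contains more than two nonterminals. So the gap you deferred is not a routine bookkeeping step: as formulated it cannot be closed for $k\ge 2$, and the example above appears to contradict the statement of Theorem \ref{UJ} itself (the case $k=1$, i.e.\ Lemma \ref{GE}, is not affected).
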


\begin{proof} Let $G=(N,V,S,H)$ be a context-free grammar such
that $L(G)=L$, and let $G_A=(N,V,A,H)$ for all $A\in N$.
Because $L$ is not k-linear, there exists $A_0,\ldots ,A_k\in V_N$ and $\alpha,\beta_0,\ldots ,\beta_k\in V^*$ such that
$S\Rightarrow^*\alpha A_0\beta_0\ldots  A_k\beta_k$,
where all of the languages $L(G_{A_l})$, $0\leq l \leq k$ are infinite.
Then the words

$\{\alpha\} L(G_{A_0})\{\beta_0\}\ldots L(G_{A_k})\{\beta_k\}\subseteq L$,
and applying the Bar-Hillel Lemma for all $L(G_{A_l})$ we receive
$\alpha a_0b_0^{i_0}c_0d_0^{i_0}e_0\beta_0\ldots a_kb_k^{i_k}c_kd_k^{i_k}e_k\beta_k\subseteq L$ for all $i_0\geq0,\ldots ,i_k\geq0$.
Let $u=\alpha a_0,~v_l=b_l,~w_l=c_l,~x_l=d_l,~y_l=e_l\beta_l$,
and we have the above form.
\end{proof}

\newpage

\begin{remark} With $k=1$ we have a pumping lemma for
non-linear context-free languages.
\end{remark}

Knowing that every $k$-linear language is metalinear for any $k\in\mathbb N$, we have:

\begin{proposition}
\label{UJ2}
 Let $L$ be a not metalinear context-free language.
 For all integers $k\geq 1$ there exist infinite many words
$w\in L$ which admit a factorization

\noindent
$w=uv_0w_0x_0y_0\ldots v_kw_kx_ky_k$
satisfying

1. $uv_0^{i_0}w_0x_0^{i_0}y_0\ldots v_k^{i_k}w_kx_k^{i_k}y_k\in L$
for all integer $i_0,\ldots ,i_k\geq0$

2. $|v_jx_j|\neq0$ for all $0\leq j\leq k$.
\end{proposition}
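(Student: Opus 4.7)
The plan is to derive Proposition \ref{UJ2} as a direct corollary of Theorem \ref{UJ}. The key observation is the definition of metalinearity: a language is metalinear if it is $k$-linear for some positive integer $k$. Taking the contrapositive, a context-free language $L$ fails to be metalinear precisely when, for every positive integer $k$, the language $L$ is not $k$-linear. This lifts the ``for a given positive integer $k$'' hypothesis of Theorem \ref{UJ} into the ``for all integers $k \geq 1$'' conclusion of the proposition.

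First I would fix an arbitrary integer $k \geq 1$. Since $L$ is context-free but not metalinear, in particular it is not $k$-linear, so the hypothesis of Theorem \ref{UJ} is satisfied for this value of $k$. I would then invoke Theorem \ref{UJ} to obtain infinitely many words $w \in L$ admitting a factorization
\[
w = u v_0 w_0 x_0 y_0 \ldots v_k w_k x_k y_k
\]
satisfying the iteration property
\[
u v_0^{i_0} w_0 x_0^{i_0} y_0 \ldots v_k^{i_k} w_k x_k^{i_k} y_k \in L
\]
for all $i_0, \ldots, i_k \geq 0$, together with $|v_j x_j| \neq 0$ for $0 \leq j \leq k$. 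These are exactly conditions 1 and 2 of Proposition \ref{UJ2}, so the conclusion holds for this $k$. Since $k$ was arbitrary, the conclusion holds for all $k \geq 1$.

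There is essentially no obstacle here; the proposition is a one-line reduction to the preceding theorem once one unwinds the definition of ``metalinear.'' The only point that deserves explicit mention is that we are applying Theorem \ref{UJ} separately for each $k$, so the infinite set of witness words $w$ may depend on $k$ (which is consistent with the statement of the proposition, where $w$ is quantified after $k$). No additional grammar-theoretic machinery is needed beyond what was established in the proof of Theorem \ref{UJ}.
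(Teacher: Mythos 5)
Your proof is correct and matches the paper's approach exactly: the paper derives Proposition \ref{UJ2} from Theorem \ref{UJ} via the single observation that every $k$-linear language is metalinear, so a non-metalinear language is not $k$-linear for any $k\geq 1$. Your added remark that the witness words may depend on $k$ is a sensible clarification consistent with the quantifier order in the statement.
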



\section{Applications of the new iteration lemmas}

As pumping lemmas are usually used to show that a language does not belong to a language class, we present an example for this type of application.

\begin{example}
The DYCK language (the language of correct bracket expressions) is not $k$-linear
for any value of $k$ over the alphabet $\{ (,)\}$. Let $k\ne 1$ be fixed as $\frac{g}{h}$.
Let us consider the word of the form $(^{(g+h)(n+2)})^{(g+h)(n+2)}$. Then Theorem \ref{THM-1}
does not work (if $k\ne 1$), the pumping deletes or introduces different number of $($'s and $)$'s.
To show that the DYCK language is not 1-rated (i.e., even-)linear let us consider
the word $(^{2n})^{2n}(^{2n})^{2n}$. Using Theorem \ref{THM-2} the number of inner brackets
can be pumped. In this way such words are obtained in which there are prefixes with more letters $)$ than $($. Since these words do not belong to the language, this language is not $k$-linear.
\end{example}

In the previous example we showed that the DYCK language is not fixed linear.

In the next example we consider a deterministic linear language.
\begin{example}\label{ex-lin}
Let  $L=\{a^mb^m | m \in \mathbb N \}   \cup \{ a^mcb^{2m} | m\in \mathbb N \}$ over the alphabet $\{a,b,c \}$.
 Let us assume that the language is fixed linear. First we show that this language is not fixed linear with ratio other than 1. On the contrary, assume that it is, with
 $k = \frac{g}{h} \in \mathbb Q$ such that $k\ne 1$. Let $n$ be given by Theorem \ref{THM-1}.
Then consider the words of the form $a^{m(g+h)}b^{m(g+h)}$ with $m>n$. By the theorem any of them can be factorized to $uvwxy$ such that $|uv| \leq \frac{2nh}{g+h}$. Since $g+h > 2$
(remember that $g,h \in\mathbb N$, relatively primes and $g\ne h$), $|uv| < nh$, and
therefore both $u$ and $v$ contains only $a$'s. By a similar argument on the length of $xy$,
$x$ and $y$ contains only $b$'s. Since the ratio $\frac{|x|}{|v|}$ (it is fixed by the theorem) is not 1, by pumping we get words outside of the language.
Now we show that this language is not even-linear.
Assume that it is 1-rated linear ($g=h=1$). Let $n$ be the value from Theorem \ref{THM-1}.
Let us consider the words of shape  $a^{m}cb^{2m}$ with $m>n$. Now we can factorize
these words in a way, that $|uv| \leq {n}$ and $|xy| \leq n$ and $|v| = |x|$. By pumping we get words $a^{m+j}cb^{2m+j}$ with some positive values of $j$, but they are not in $L$.
We have a contradiction again. So this language is not fixed linear.
\end{example}

In the next example we show a fixed-linear language that can be pumped.
\begin{example}
Let $L$ be the language of palindromes, i.e., of the words  
 over $\{a,b \}$ that are the same in reverse order ($p=p^R$). We show that our pumping lemmas 
 work for this language with the value $k=1$.
 Let $p \in L$, then $p = uvwxy$ according to Theorem \ref{THM-1} or Theorem \ref{THM-2},
 such that $|u|=|y|$ and $|v|=|x|$. Therefore, by applying the main property of the palindromes, we have $u = y^R$, $v = x^R$ and $w=w^R$.
 By $i=0$ the word $uwy$ is obtained which is in $L$ according to the previous equalities.
 By further pumping the words $u v^i w x^i y$ are obtained, they are also palindromes.
To show that this language cannot be pumped with any other values, let us consider
words of shape $a^m b a^m$. By Theorem \ref{THM-1} it can be shown in analogous way that
we showed in Example \ref{ex-lin} that enough long words cannot be pumped with ratio $k\ne 1$.
\end{example}

Besides our theorems work for regular languages with $k=0$ there is a non-standard application of them.
As we already mentioned, all regular languages are $k$-rated linear for any values of $k\in\mathbb Q$. Therefore every new pumping lemma works for any regular language with
any values of $k$. Now we show some examples.

\begin{example}
Let the regular language $(ab)^*aa(bbb)^*a$ be given.
Then we show, that our theorems work for, let us say, $k=\frac{1}{2}$. 
Every word of the language is of the form $(ab)^naa(bbb)^ma$ (with $n,m\in\mathbb N$). For  words that are long enough either
$n$ or $m$ (or both of them) are sufficiently large. Now we detail effective factorizations
$p=uvwxy$ of the possible cases. We give only those words of the factorization that have maximized lengths due to the applied theorem, the other words can easily be found by the factorization and, at Theorem \ref{THM-2}, by taking into account the fixed ratio of some lengths in the factorization.
{\itemize
\item

Theorem \ref{THM-1} for $k=\frac{1}{2}$: \\
if $n>3$ and $m>0$ : let $u=ab$, $v=ababab$, $x=bbb$, $y=a$, \\
if $m = 0$ : let $u=ababab$, $v=abab$, $x=ab$, $y=aaa$, \\
if $n=3$ : let $u=abababaa$, $v=bb$, $x=b$, $y=bbba$,\\
if $n=2$ : let $u=ababaa$, $v=bb$, $x=b$, $y=bba$,\\
if $n=1$ : let $u=abaa$, $v=bb$, $x=b$, $y=ba$,\\
if $n=0$ : let $u=aa$, $v=bb$, $x=b$, $y=a$.
\item

Theorem \ref{THM-2} for $k=\frac{1}{2}$: \\
if $n\leq 3m-4$ : let $v=bb$, $w=b $ $x=b$, \\
if $n=3m-3$ : let $v=ababab$, $w=aabbbb $ $x=bbb$, \\
if $n=3m-2$ : let $v=ababab$, $w=abaabbbb $ $x=bbb$, \\
if $n=3m-1$ : let $v=ababab$, $w=ababaabbbb $ $x=bbb$, \\
if $n=3m$ : let $v=ababab$, $w=aab $ $x=bbb$, \\
if $n = 3m+1$ : let $v=ababab$, $w=abaab$, $x=bbb$, \\
if $n = 3m+2$ : let $v=ababab$, $w=ababaab$, $x=bbb$, \\
if $n = 3m+3$ : let $v=ababab$, $w=abababaab$, $x=bbb$, \\
if $n = 3m+4$ : let $v=ababab$, $w=ababababaab$, $x=bbb$, \\
if $n = 3m+5$ : let $v=ababab$, $w=abababababaab$, $x=bbb$, \\
if $n \geq 3m+6$, $n\equiv0 ($mod$ 3)$ : let $v=abab$, $w=\lambda$, $x=ab$, \\
if $n \geq 3m+7$, $n\equiv1 ($mod$ 3)$ : let $v=abab$, $w=ab$, $x=ab$, \\
if $n \geq 3m+8$, $n\equiv2 ($mod$ 3)$ : let $v=abab$, $w=abab$, $x=ab$. \\

}
\end{example}
In similar way it can be shown  that pumping the words of a regular language in two places simultaneously  with other values of $k$ (for instance, $1, 5, \frac{7}{3}$ etc.) works.

In the next example we show that there are languages that can be pumped by the usual
pumping lemmas for regular languages, but they cannot be regular since we prove 
that there is a value of $k$ such that one of our theorems does not work.

\begin{example}
Let $L = \{a^r b a^q b^m | r,q,m \geq 2, \exists j\in\mathbb N: q=j^2 \}$.
By the usual pumping lemmas for regular languages, i.e., by fixing $k$ as 0,
one cannot infer that this language is not regular. By $k=0$, $x=y=\lambda$ and so
$p = uvw$.
Due to the $a$'s in the beginning, Theorem \ref{THM-1} works: $u=a,v=a$; 
and due to the $b$'s in the end Theorem \ref{THM-2} also works: $v=b,w=b$. \\
Now we show that $L$ is not even-linear.
Contrary, let us assume that Theorem \ref{THM-2} works for $k=1$. Let $n$ be the value
for this language according to the theorem.
Let $p = a^2 b a^{(2n+5)^2} b^3$. By the conditions of the theorem, it can be factorized
to $uvwxy$ such that $|v|,|w|,|x| \leq n$ and $|u| = |y|$.
In this way $vwx$ must be a subword of $a^{(2n+5)^2}$, and so, the pumping decreases/increases
only $q$. Since $|v| , |x| \leq n$ in the first round of pumping
$p' = a^2 b a^{(2n+5)^2 + |vx|} b^3$ is obtained. But $(2n+5)^2 < (2n+5)^2 + |vx| \leq
(2n+5)^2 + 2n < (2n+6)^2$, therefore $p' \not\in L$. \\
 Thus $L$ is not even-linear, and therefore it cannot be regular. Our pumping lemma was effective to show this fact.
\end{example}

Usually pumping lemmas can be used only to show that some languages do not belong
 to the given class of languages. One may ask what we can say if a language satisfy
 our theorems.
Now we present an example which shows that we cannot infer about the language class
 if a language satisfies our new pumping lemmas.

\begin{example}
Let $L = \{ 0^j1^m0^r1^i 0^l 1^i 0^r 1^m 0^j | j,m,i,l,r \geq 1, r \textnormal { is prime} \}$.
One can easily show that this language satisfies both Theorem \ref{THM-1}
and Theorem \ref{THM-2} with $k=1$: one can find subwords to pump in the part of outer $0$'s or $1$'s (pumping their number form a given $j$ or $m$ to arbitrary high values), or in the middle part $0$'s or $1$'s (pumping their number from $i$ or $l$ to arbitrary high values), respectively.
But this language is not even context-free, since intersected by the regular language
$010^*1010^*10$ a non semi-linear language is obtained. Since context-free languages are semi-linear (due to the Parikh theorem) and the class of context-free languages are closed
under intersection with regular languages, we just proved that $L$ cannot be linear or
fix-rated linear.
\end{example}

It is a more interesting question what we can say about a language for which there
are values $k_1 \ne k_2$ such that all its enough long words can be pumped
both as $k_1$-rated and $k_2$-rated linear language.
We have the following conjecture.
\begin{conjecture}
If a language $L$ satisfies any of our pumping lemmas for two different values of $k$,
then $L$ is regular.
\end{conjecture}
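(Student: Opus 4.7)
My plan is to attack the conjecture via a combination of Myhill--Nerode and a ``rate cancellation'' argument that exploits the independence of two distinct pumping ratios $k_1 = g_1/h_1$ and $k_2 = g_2/h_2$. The intuition is that a single fixed ratio $k$ only tells us that \emph{linked} prefix/suffix insertions preserve membership, whereas two incompatible ratios should let us decouple the two ends entirely, leaving only a pair of regular pumping conditions on the prefix and the suffix separately.

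\textbf{Step 1 (set-up).} Fix $k_1 \neq k_2$ and let $n_1, n_2$ be the corresponding pumping constants. For a word $p \in L$ with $|p| \geq \max(n_1,n_2)$ I would write down the two guaranteed factorizations $p = u_1 v_1 w_1 x_1 y_1 = u_2 v_2 w_2 x_2 y_2$ of Theorem~\ref{THM-1} (a parallel argument will handle Theorem~\ref{THM-2}), and record the two families of pumped descendants, all in $L$. Because $|v_j|$ and $|x_j|$ are bounded in terms of $n_j$, only finitely many $(|v_1|,|x_1|)$ and $(|v_2|,|x_2|)$ pairs can occur, so by taking an infinite sequence of increasingly long $p$'s I may assume these four lengths are \emph{constant} across the sequence.

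\textbf{Step 2 (rate cancellation).} Iterating the two pumpings in alternation on the same base word $p$ and using Step 1 to keep the inserted lengths uniform, I obtain words in $L$ whose prefix has been extended by $a|v_1| + b|v_2|$ and whose suffix has been extended by $a|x_1| + b|x_2| = a k_1 |v_1| + b k_2 |v_2|$, for all $a,b \in \mathbb N$. Since the $2\times 2$ matrix with rows $(|v_1|, k_1|v_1|)$ and $(|v_2|, k_2|v_2|)$ is nonsingular (because $k_1 \neq k_2$ and $|v_j|>0$), the integer lattice of realizable prefix/suffix extensions has full rank. Combining with downward pumping (taking $i=0$ in either lemma) then yields prefix-only and suffix-only insertions of some uniform block lengths $P$ and $Q$ that preserve membership in $L$.

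\textbf{Step 3 (from decoupled pumping to regularity).} Once we have such uniform ``insert $P$ symbols in the prefix region'' and ``insert $Q$ symbols in the suffix region'' operations available on cofinitely many words of $L$, a Myhill--Nerode argument becomes available: I would show that the right congruence $\sim_L$ collapses all sufficiently long prefixes that agree modulo the period $P$ on a bounded window, and symmetrically on the suffix side, leaving only finitely many classes. Together with the (trivially finite) classes among short words, this gives finite Myhill--Nerode index, hence $L$ is regular.

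The main obstacle I expect is Step 2. The pumping lemmas give us \emph{some} factorization for each word, but not the same one across different words, and nothing prevents the two factorizations of a single $p$ from overlapping in their pumped pieces (e.g.\ $v_1$ and $v_2$ might share positions, making the iterated ``$a$ times rate $k_1$, then $b$ times rate $k_2$'' operation ambiguous). Turning the heuristic lattice argument into a rigorous production of truly decoupled insertions likely requires either a Ramsey-style extraction to make all relevant positions disjoint, or a preliminary lemma showing that one may always choose the factorizations at rates $k_1$ and $k_2$ with their pumped blocks in disjoint regions of $p$. If this can be done, the rest of the proof should go through in the standard way; if it cannot, the conjecture may fail in an edge case, and locating that case is how I would search for a counterexample before committing to the proof.
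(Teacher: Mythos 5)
First, a point of orientation: the paper offers no proof of this statement --- it is stated as a conjecture and the Conclusions explicitly say the question ``is remained open as a conjecture.'' So there is no argument of the authors' to compare yours against; a complete proof would be a new result, and your proposal must be judged on its own.

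Judged that way, it has two genuine gaps. The first is the one you partially flag yourself, but it is more damaging than you suggest: in Step 2 you iterate the two pumpings ``in alternation on the same base word,'' treating each pumping lemma as a fixed, position-stable operator. The lemmas only assert that \emph{some} factorization of each sufficiently long word exists; after one application you hold a new word whose guaranteed factorization at the other rate can sit at completely different positions (and, for Theorem~\ref{THM-1}, the $u,v$ and $x,y$ blocks are confined to windows of length about $n\frac{h}{g+h}$ and $n\frac{g}{g+h}$ near the two ends, so the two rates' blocks need not even be separable). Consequently the claimed rank-two lattice of realizable prefix/suffix extensions is never established; moreover, since $a,b\geq 0$ and downward pumping only permits $i=0$ (one removal, after which the factorization is again uncontrolled), you cannot freely subtract lattice vectors to isolate a prefix-only insertion. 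The second gap is Step 3: even granting decoupled periodic insertions near both ends of every long word, regularity does not follow. That conclusion is essentially ``the converse of the regular pumping lemma,'' which is false --- there are classical non-regular languages all of whose long words admit such insertions --- and the paper's own last example exhibits a language satisfying both Theorem~\ref{THM-1} and Theorem~\ref{THM-2} (for $k=1$) that is not even context-free. So any successful proof must exploit the hypothesis of \emph{two distinct} rates globally, e.g.\ to bound the Myhill--Nerode index outright, rather than merely to manufacture local insertions; your sketch does not yet do this, and it remains possible that the conjecture is simply false, so the counterexample search you mention at the end is not an afterthought but an essential parallel track.
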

If the previous conjecture is true, then exactly the regular languages  form
the intersection of the $k$-rated linear language families (for $k\in \mathbb Q$).

Regarding iteration lemma for the not metalinear case, we show two examples.

\begin{example}
This is a very simple example, we can use our lemma to show that the language
$$L_1=\{a^lb^la^mb^ma^nb^n~|~l,m,n\geq 0\}$$ is metalinear.

First of all, it is easy to show that $L_1$ is context-free. The language $L_1$ does
not satisfy the condition of the pumping lemma for not metalinear context-free
languages, (Proposition \ref{UJ2},) so $L_1$ must be a metalinear context-free language.
\end{example}

In our next example we show a more complicated language which satisfies the Bar-Hillel
condition, and we use our pumping lemma to show that the language is not context-free.

\begin{example}
Let $$H\subseteq\{2^k~|~k\in\mathbb N\}$$ be an infinite set, and let
$$L_2=\{a^lb^la^mb^ma^nb^n~|~
l,m,n\geq 1;~l\in H~or~m\in H~or~n\in H\} \cup$$
$$\cup \{a^ib^ia^jb^j~|~i,j\geq1\}.$$

$L_2$ satisfies the Bar-Hillel condition. Therefore we can not
apply the Bar-Hillel Lemma to show that $L_2$ is not context-free.
However it is easy to show that $L_2$ is not $3$-linear language.
Now we can apply Theorem \ref{UJ}, and the language  $L_2$ does
not satisfy its condition with $k=3$. This means $L_2$ does not belong
to the not $3$-linear context-free languages, so the language $L_2$ is
not context-free.
\end{example}

\section{Conclusions}

In this paper some new pumping lemmas are proved for special context-free and linear languages. In fix-rated
linear languages the lengths of the pumped subwords of a word depend on each other,
therefore these pumping lemmas are more restricted than the ones working on every linear
or every context-free languages.
Since all regular languages are $k$-rated linear for any non-negative rational value of $k$, 
these lemmas also work for regular languages.
The question whether only regular languages satisfy our pumping lemmas at least for two
different values of $k$ (or for all values of $k$) is remained open as a conjecture.
We also investigated a special subclass of context-free language family and
introduced iteration conditions which is satisfied only not metalinear context-free languages.
These conditions can be used in two different ways. First they can be used to proove that a
language is not context-free. On the other hand, we can also use them to show that the given
language is belong to the metalinear language family.

\begin{figure}
\begin{picture}(300,100)(-70,-52)
\put (0,0){\oval(95,95)}
\put (0,0){\oval(75,75)}
\put (0,0){\oval(55,55)}
\put (0,0){\oval(35,35)}
\put (30,40){\line(1,0){90}}
\put (24,29){\line(1,0){96}}
\put (16,18){\line(1,0){104}}
\put (8,7){\line(1,0){112}}
\put (-37,15){\line(1,1){22}}
\put (-37,-15){\line(1,1){10}}
\put (5,27){\line(1,1){10}}
\put (-37,0){\line(1,1){10}}
\put (-10,27){\line(1,1){10}}
\put (-35,-28){\line(1,1){10}}
\put (18,25){\line(1,1){10}}
\put (-27,-35){\line(1,1){10}}
\put (25,17){\line(1,1){10}}
\put (-14,-37){\line(1,1){10}}
\put (28,5){\line(1,1){10}}
\put (1,-37){\line(1,1){10}}
\put (28,-10){\line(1,1){10}}
\put (16,-37){\line(1,1){22}}
\put (6,-17){\line(1,1){11}}
\put (-8,-16){\line(1,1){24}}
\put (-16,-9){\line(1,1){25}}
\put (-17,5){\line(1,1){12}}
\put(192,40){\makebox(0,0){\it Context-sensitive languages}}
\put(179,29){\makebox(0,0){\it Context-free languages}}
\put(175,18){\makebox(0,0){\it Metalinear languages}}
\put(187,7){\makebox(0,0){\it Fix-rated linear languages}}
\normalsize
\end{picture}
\caption{The target language classes of the new iteration lemmas}
\label{hier}\end{figure}
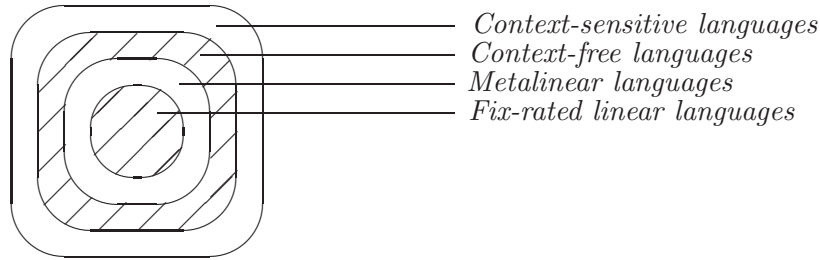
\noindent

\section*{Acknowledgements}
{The work is supported by the Czech-Hungarian bilateral project (T\'eT)
and the T\'AMOP 4.2.1/B-09/1/KONV-2010-0007 project.
The project is implemented through the New Hungary Development Plan,
co-financed by the European Social Fund and the European Regional
Development Fund.}

\bigskip
\rightline{\emph{Received: October 5, 2010 {\tiny \raisebox{2pt}{$\bullet$\!}} Revised:	November 2, 2010}}     

\end{document}